\newtheorem{theorem}{Theorem}
\newtheorem{example}{Example}
\newtheorem{lemma}{Lemma}
\newtheorem{corollary}{Corollary}
\begin{document}

\title{Quantum Uncertainty Based on Metric Adjusted Skew Information}

\author{Liang Cai\\
      School of Mathematics and Statistics,\\
      Beijing Institute of Technology, Beijing 100081, China\\
      cailiang@bit.edu.cn
       }

\date{}

\maketitle

\begin{abstract}

Prompted by the open questions in Gibilisco [Int. J. Software Informatics, 8(3-4): 265, 2014], in which he introduced a family of measurement-induced quantum uncertainty measures via metric adjusted skew informations, we investigate these measures' fundamental properties (including basis independence and spectral representation), and illustrate their applications to detect quantum nonlocality and entanglement.  
\\[1ex]
{\bf{Key words and phrases:}} metric adjusted skew information; operator monotone functions; quantum uncertainty; quantum nonlocality; quantum entanglement
\\[1ex]
{\bf MSC2010 classification}: 94A17, 81P15, 15A45.

\end{abstract}


\section{Introduction}
Usually quantum uncertainty of a quantum state $\rho$ is measured by the von Neumann entropy
\begin{eqnarray}
S(\rho)=-\mathrm{tr}\rho \log \rho. \notag
\end{eqnarray}
It is a natural quantum counterpart of Shannon's entropy and plays an important role in quantum information theory (see for example \cite{MR1796805}). On the other hand it coincides with Shannon's entropy only when the measurement basis coincides with the eigenbasis of the density matrix $\rho$. So it may not capture all informational regularities and peculiarities of quantum states. If we evaluate Shannon's entropy under two or more ``mutually unbiased bases'', then the entropic uncertainty relation is developed and applied to quantum communication (see for example \cite{MR8426024}). One operationally invariant information measure is proposed by Brukner and Zeilinger$^{\cite{MR1720173}}$ 
\begin{eqnarray}
\tilde{S}(\rho)=\mathrm{tr} \rho^2-\frac{1}{n}, \notag
\end{eqnarray}
where $n$ is the dimension of the quantum system.
Luo$^{\cite{MR2338161}}$ interpreted this measure as the total variance of an observable basis  $\{H_j:j=1,...,n^2\}$ under a state $\rho$
\begin{eqnarray}
U(\rho)=\sum_{j=1}^{n^2}V(\rho,H_j)=n-\mathrm{tr} \rho^2, \notag
\end{eqnarray}
which is independent of the choice of the basis. Here $V(\rho,H)$ denotes the variance of an observable $H$ 
\begin{eqnarray}
V(\rho,H)=\mathrm{tr}\rho H^2 -(\mathrm{tr}\rho H)^2.\notag
\end{eqnarray}
And the basis $\{H_j:j=1,...,n^2\}$ is an orthonormal basis of the real Hilbert space of Hermitian operators on the quantum system with inner product $\langle A,B\rangle=\mathrm{tr} AB$.

In one earlier article, Luo$^{\cite{PhysRevA.73.022324}}$ replaced the variance by the Wigner-Yanase skew information 
\begin{eqnarray}
I^{WY}(\rho,H)=-\frac{1}{2}\mathrm{tr}[\sqrt{\rho},H]^2,\notag
\end{eqnarray}
and derived a measure of quantum uncertainty 
\begin{eqnarray}\label{Q of WY}
Q^{WY}(\rho)=\sum_{j=1}^{n^2}I^{WY}(\rho,H_j)=n-(\mathrm{tr}\sqrt{\rho})^2.
\end{eqnarray}
Mathematically, it is connected with Tsallis entropy $S_q(\rho):=(1-\mathrm{tr}\rho^q)/(q-1)$ with index $q=1/2$ (see \cite{MR968597}). It is basis independent and convex with respect to $\rho$. Further, this measure can be applied to detect quantum nonlocality and entanglement (see \cite{PhysRevA.88.014301,PhysRevA.85.032117}), so that this measure can be regarded as an important index of quantum correlations. It is well known the Wigner-Yanase information is a special type of the quantum Fisher information$^{\cite{MR2003930}}$
(or the metric adjusted skew information$^{\cite{MR2430207}}$). Gibilisco (section 5 in \cite{gibiliscofisherinf}) proposed a general form of the quantum uncertainty 
\begin{eqnarray}\label{main quantity}
Q^{f}(\rho)=\sum_{j=1}^{n^2}I^{f}(\rho,H_j),
\end{eqnarray}
and suggested to prove that it is well defined and investigate its applications. We denote by $\mathcal{F}_{op}$ the set of all functions $f:R_+ \rightarrow R_+$ such that
\begin{enumerate}
\item[(i)] $f$ is operator monotone,
\item[(ii)] $f(t)=tf(t^{-1})$ for all $t>0$,
\item[(iii)] f(1)=1.
\end{enumerate}
We say that $f$ is regular if $f(0)>0$ and non-regular if $f(0)=0$. In the sequel, $f$ always denotes a regular function in $\mathcal{F}_{op}$. For a regular $f\in \mathcal{F}_{op}$, $I^f(\rho,H)$ denotes the metric adjusted skew information
\begin{eqnarray}\label{quantum fisher in}
I^f(\rho,H)=\frac{f(0)}{2}\mathrm{tr}\,i[\rho,H]\frac{1}{m^f(L_\rho, R_\rho)}i[\rho,H],
\end{eqnarray}
adjusted by an operator mean 
\begin{eqnarray}
m^f(x,y)=xf(\frac{y}{x}).\notag
\end{eqnarray}
$L_\rho$ and $R_\rho$ are the positive definite commuting left and right multiplication operators by $\rho$. The metric adjusted skew information has the following interesting properties (see \cite{MR2661519,MR2003930,MR2430207}).

(a) $I^f(\rho,H)=V(\rho,H)$, when $\rho$ is pure. For mixed state $\rho$, we have 
\begin{eqnarray}
I^f(\rho,H)\leq V(\rho,H).\notag
\end{eqnarray}

(b) $I^f(\rho,H)=0$ whenever $\rho$ commutes with $H$, and $I^f(U\rho U^{\dag},UHU^{\dag})=I^f(\rho,H)$ for any unitary operator $U$. 

(c) The metric adjusted skew information decreases when several states are mixed:
\begin{eqnarray}
I^f(\sum_j \alpha_j \rho_j,H)\leq \sum_j \alpha_j I(\rho_j,H),\notag
\end{eqnarray}
where $\sum_j \alpha_j=1,\alpha_j \geq 0$.

(d) The metric adjusted skew information satisfies the weak form of superadditivity. Let $\rho^{ab}$ be a state of composite system $H^a \otimes H^b$, then for an observable $A$ on $H^a$, we have
\begin{eqnarray}\label{weak superadditivity}
I^f(\rho^{ab}, A \otimes \mathbf{1}^b)\geq I^f(\rho^a, A).
\end{eqnarray}

Our main result is to show that all the quantum uncertainty meaures defined by (\ref{main quantity}) have common essential properties, which will be proved and summarized in Section \ref{secofproperties}, and all the quantum uncertainty measures defined by (\ref{main quantity}) can be applied to detect the quantum nonlocality and entanglement, which will be investigated in section \ref{secofapplications}. Finally, we will give more discussions in Section \ref{discussion}.


\section{Basis independence and spectral representation}\label{secofproperties}

We define a monotone metric on the observable space by (see \cite{MR1403277}) 
\begin{eqnarray}
K_\rho^f(A,B)=\frac{f(0)}{2}\mathrm{tr} A\frac{1}{m^f(L_\rho,R_\rho)}B,\notag
\end{eqnarray}
whose monotonicity means that it decreases under a quantum operation. That is to say, for any linear completely positive map $T$ with trace preserving property, 
\begin{eqnarray}
K_{T(\rho)}^f(T(A),T(B))\leq K_\rho^f(A,B).\notag
\end{eqnarray}
Therefore the metric adjusted skew information (\ref{quantum fisher in}) can be rewritten as
\begin{eqnarray}
I^f(\rho,H)=K_\rho^f(i[\rho,H],i[\rho,H]).\notag
\end{eqnarray} 
Obviously, $K_\rho^f(A,B)$ is bilinear on $\mathbb{R}$. 

Consider an $n$-dimensional quantum system $H$ with system Hilbert space $\mathbb{C}^n$. The set of all observables on $H$ (i.e. self-adjoint operators on $H$) constitutes a real $n^2$-dimensional Hilbert space $L(H)$ with inner product $\langle A,B\rangle=\mathrm{tr} AB$. We will show that the quantity (\ref{main quantity}) is basis independent.

\begin{theorem}\label{theoremofbasisindependent}
Let $\{H_j\}$ and $\{K_j\}$ be two different orthonormal bases of $L(H)$. Then
\begin{eqnarray}
\sum_{j=1}^{n^2}I^{f}(\rho,H_j)=\sum_{j=1}^{n^2}I^{f}(\rho,K_j).\notag
\end{eqnarray}
\end{theorem}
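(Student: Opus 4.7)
The plan is to exploit the fact that, via the monotone metric $K_\rho^f$ on the observable space, $I^f(\rho,H)$ arises from a single quadratic form in $H$; summing a quadratic form over an orthonormal basis gives a basis-independent quantity (essentially a trace).

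First I would recall the two key structural facts already set up in the excerpt: that $I^f(\rho,H)=K_\rho^f(i[\rho,H],i[\rho,H])$ and that $K_\rho^f$ is bilinear over $\mathbb{R}$. Together with the fact that the commutator map $\Phi_\rho:H\mapsto i[\rho,H]$ is $\mathbb{R}$-linear from $L(H)$ to itself, this means $H\mapsto I^f(\rho,H)$ is a real quadratic form on the $n^2$-dimensional real Hilbert space $L(H)$.

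Next, given two orthonormal bases $\{H_j\}$ and $\{K_j\}$ of $L(H)$, I would write the change-of-basis matrix $K_j=\sum_k O_{jk}H_k$, noting that $O=(O_{jk})$ is a real orthogonal matrix, so $\sum_j O_{jk}O_{jl}=\delta_{kl}$. Using $\mathbb{R}$-linearity of $\Phi_\rho$ and bilinearity of $K_\rho^f$, I expand
\begin{equation*}
\sum_{j=1}^{n^2} I^f(\rho,K_j)=\sum_{j,k,l} O_{jk}O_{jl}\, K_\rho^f\bigl(i[\rho,H_k],i[\rho,H_l]\bigr)=\sum_{k,l}\delta_{kl}\, K_\rho^f\bigl(i[\rho,H_k],i[\rho,H_l]\bigr),
\end{equation*}
which collapses to $\sum_k I^f(\rho,H_k)$, as desired. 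An equivalent phrasing would be to introduce the unique self-adjoint operator $\mathcal{M}_\rho$ on $L(H)$ representing the quadratic form, so that $\sum_j I^f(\rho,H_j)=\mathrm{tr}_{L(H)}\mathcal{M}_\rho$ is manifestly basis independent.

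There is no real obstacle here; the only thing worth being careful about is confirming that all three ingredients are genuinely defined over $\mathbb{R}$ (the basis $\{H_j\}$ consists of self-adjoint operators, $\Phi_\rho$ sends self-adjoint to self-adjoint because of the factor $i$, and $K_\rho^f$ is $\mathbb{R}$-bilinear), so that the change-of-basis matrix is orthogonal rather than merely unitary and the cross terms really do cancel via $\sum_j O_{jk}O_{jl}=\delta_{kl}$.
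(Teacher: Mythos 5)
Your proposal is correct and follows essentially the same route as the paper: both write the change-of-basis matrix between the two orthonormal bases, note that it is real orthogonal, and use the $\mathbb{R}$-bilinearity of $K_\rho^f$ together with $\sum_j O_{jk}O_{jl}=\delta_{kl}$ to collapse the cross terms. Your closing remark identifying the sum as $\mathrm{tr}_{L(H)}\mathcal{M}_\rho$ is a nice conceptual restatement but does not change the argument.
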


\begin{proof}
Our proof heavily depends on the bilinearity of $K_\rho^f(A,B)$.  We may write 
\begin{eqnarray}
K_i=\sum_{j=1}^{n^2}a_{ij}H_j, \quad i=1,2,...,n^2\notag
\end{eqnarray}
with $\{a_{ij}\}_{n^2\times n^2}$ a real orthogonal matrix, hence we have
\begin{eqnarray}
\sum_{i=1}^{n^2} a_{ij}a_{ik}=\delta_{jk}, \quad j,k=1,2,...,n^2.\notag
\end{eqnarray}
Consequently,
\begin{eqnarray}
&&\sum_{j=1}^{n^2}I^{f}(\rho,K_j)=\sum_{j=1}^{n^2}\sum_{k,l=1}^{n^2}a_{jk}a_{jl}K_\rho^f(i[\rho,H_k],i[\rho,H_l])
\notag\\&&=\sum_{k,l=1}^{n^2}\Big(\sum_{j=1}^{n^2}a_{jk}a_{jl}\Big)K_\rho^f(i[\rho,H_k],i[\rho,H_l])
\notag\\&&=\sum_{j=1}^{n^2}K_\rho^f(i[\rho,H_j],i[\rho,H_j])=\sum_{j=1}^{n^2}I^{f}(\rho,H_j). \notag
\end{eqnarray}
\end{proof}

We have established the basis independence of $Q^f(\rho)$, so we can choose a special basis $\{H_j\}$ with matrix representation under the eigenbasis of $\rho$ to calculate the spectral representation of $Q^f(\rho)$.

\begin{theorem}\label{theoremofspectral}
Let $\{\lambda_j: j=1,...,n\}$ be the spectrum of $\rho$, then
\begin{eqnarray}\label{spectrum representation}
Q^f(\rho)=\frac{f(0)}{2}\sum_{k,l=1}^{n}\frac{(\lambda_k-\lambda_l)^2}{m^f(\lambda_k,\lambda_l)}.
\end{eqnarray}
\end{theorem}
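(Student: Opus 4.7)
The plan is to exploit the basis independence from Theorem~\ref{theoremofbasisindependent} and choose an orthonormal basis of $L(H)$ built from the eigenbasis $\{|k\rangle\}$ of $\rho$. Namely, take the standard self-adjoint basis consisting of the diagonal projections $E_{kk}=|k\rangle\langle k|$ for $k=1,\dots,n$, together with the off-diagonal pairs $X_{kl}=\tfrac{1}{\sqrt{2}}(|k\rangle\langle l|+|l\rangle\langle k|)$ and $Y_{kl}=\tfrac{i}{\sqrt{2}}(|k\rangle\langle l|-|l\rangle\langle k|)$ for $k<l$. A short check shows these are orthonormal in $L(H)$ under $\langle A,B\rangle=\mathrm{tr}\,AB$, so basis independence reduces the problem to evaluating $I^f(\rho,\cdot)$ on these particular observables.

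The first observation is that each $E_{kk}$ commutes with $\rho$, so by property~(b) of the metric adjusted skew information we have $I^f(\rho,E_{kk})=0$; the diagonal part of the basis contributes nothing. For the off-diagonal elements the key computation is that $|k\rangle\langle l|$ is a simultaneous eigenvector of $L_\rho$ and $R_\rho$ with eigenvalues $\lambda_k$ and $\lambda_l$ respectively, hence
\begin{eqnarray}
\frac{1}{m^f(L_\rho,R_\rho)}|k\rangle\langle l|=\frac{1}{m^f(\lambda_k,\lambda_l)}|k\rangle\langle l|.\notag
\end{eqnarray}
Combining this with the direct commutators $i[\rho,X_{kl}]=(\lambda_k-\lambda_l)Y_{kl}$ and $i[\rho,Y_{kl}]=-(\lambda_k-\lambda_l)X_{kl}$, together with the symmetry $m^f(\lambda_k,\lambda_l)=m^f(\lambda_l,\lambda_k)$ coming from condition (ii) $f(t)=tf(t^{-1})$, a quick trace calculation (using $\mathrm{tr}\,X_{kl}^{2}=\mathrm{tr}\,Y_{kl}^{2}=1$) yields
\begin{eqnarray}
I^f(\rho,X_{kl})=I^f(\rho,Y_{kl})=\frac{f(0)}{2}\cdot\frac{(\lambda_k-\lambda_l)^2}{m^f(\lambda_k,\lambda_l)}.\notag
\end{eqnarray}

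Finally I would sum: each unordered pair $\{k,l\}$ with $k<l$ contributes twice (once from $X_{kl}$ and once from $Y_{kl}$), which reorganizes into the full sum over ordered pairs $(k,l)$ with an overall factor $\tfrac{1}{2}$, giving exactly formula~(\ref{spectrum representation}). I expect the main technical point to be the symmetry $m^f(\lambda_k,\lambda_l)=m^f(\lambda_l,\lambda_k)$, which is what lets one pass from the sum over $k<l$ to the symmetric sum over all $(k,l)$; apart from that, the argument is essentially bookkeeping. A minor subtlety is if $\rho$ has zero eigenvalues, in which case $m^f(0,\lambda_l)=\lambda_l f(0)$ via the symmetric form of the mean, and the formula still makes sense (the only potentially singular contributions, $\lambda_k=\lambda_l=0$, are killed by the numerator $(\lambda_k-\lambda_l)^2$).
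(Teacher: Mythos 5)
Your proof is correct and follows essentially the same route as the paper: invoke Theorem~\ref{theoremofbasisindependent}, pick the eigenbasis-adapted orthonormal basis of $L(H)$ (diagonal projections plus symmetric and antisymmetric off-diagonal combinations), and use the fact that the matrix units are joint eigenvectors of $L_\rho$ and $R_\rho$. The only cosmetic difference is that you evaluate each $I^f(\rho,X_{kl})$, $I^f(\rho,Y_{kl})$ by explicit commutators while the paper sums $|\langle\varphi_k|H_j|\varphi_l\rangle|^2$ over the whole basis at once; your remark on zero eigenvalues is a welcome extra detail the paper leaves implicit.
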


\begin{proof}
Let $\{|\varphi_j\rangle: j=1,...,n\}$ be the eiginbasis corresponding to $\{\lambda_j: j=1,...,n\}$. We choose an orthonormal basis $\{H_j:j=1,...,n^2\}$ of $L(H)$ as
\begin{eqnarray}
\Big\{|\varphi_j\rangle \langle \varphi_j|: j=1,...,n\Big\}\bigcup \Big\{\frac{1}{\sqrt{2}}|\varphi_k\rangle \langle \varphi_j|+\frac{1}{\sqrt{2}}|\varphi_j\rangle \langle \varphi_k|: 1\leq k < j \leq n\Big\}
\notag\\\bigcup \Big\{\frac{i}{\sqrt{2}}|\varphi_k\rangle \langle \varphi_j|+\frac{-i}{\sqrt{2}}|\varphi_j\rangle \langle \varphi_k|: 1\leq k < j \leq n\Big\}.\notag
\end{eqnarray}
Then
\begin{eqnarray}
Q^f(\rho)&=&\sum_{j=1}^{n^2}I^f(\rho,H_j)   
\notag\\ &=&\sum_{j=1}^{n^2}\frac{f(0)}{2}\mathrm{tr} H_j\frac{(L_\rho-R_\rho)^2}{m^f(L_\rho,R_\rho)}H_j
\notag\\ &=&\sum_{j=1}^{n^2}\frac{f(0)}{2}\sum_{k,l=1}^{n}\frac{(\lambda_k-\lambda_l)^2}{m^f(\lambda_k,\lambda_l)}|\langle \varphi_k|H_j|\varphi_l\rangle|^2
\notag\\ &=& \frac{f(0)}{2}\sum_{k,l=1}^{n}\frac{(\lambda_k-\lambda_l)^2}{m^f(\lambda_k,\lambda_l)}. \notag
\end{eqnarray}
\end{proof}

The theorem in Luo \cite{PhysRevA.73.022324} (see (\ref{Q of WY})) can be recovered by taking 
\begin{eqnarray}\label{fofwy}
f(t)=f^{WY}(t):=\frac{(\sqrt{t}+1)^2}{4}.
\end{eqnarray}

These two theorems give positive answers to the conjecture i) and ii) at page 273 in \cite{gibiliscofisherinf}. For reader's convenience and more observations on $Q^f(\rho)$, we specify the conjecture ii) in \cite{gibiliscofisherinf} in detail. 

Denote the set of regular functions in $\mathcal{F}_{op}$ by $\mathcal{F}_{op}^r$, and the set of non-regular functions in $\mathcal{F}_{op}$ by $\mathcal{F}_{op}^n$. Set
\begin{eqnarray}
\tilde{f}:=\frac{1}{2}\Big[(x+1)-(x-1)^2\frac{f(0)}{f(x)}\Big].\notag
\end{eqnarray}
Then the correspondence $f\rightarrow \tilde{f}$ is a bijection between $\mathcal{F}_{op}^r$ and $\mathcal{F}_{op}^n$ (see \cite{MR2503967}). For example, let 
\begin{eqnarray}
f^{\mathrm{SLD}}(x)=\frac{x+1}{2},\notag
\end{eqnarray}
then its corresponding function in $\mathcal{F}_{op}^n$ is
\begin{eqnarray}
\tilde{f}^{\mathrm{SLD}}(x)=\frac{2x}{x+1}.\notag
\end{eqnarray}
They are the generators of the arithmetic mean and the harmonic mean respectively, i.e.
\begin{eqnarray}
m^{f^{\mathrm{SLD}}}(x,y)=\frac{x+y}{2}, \quad\quad m^{\tilde{f}^{\mathrm{SLD}}}(x,y)=\frac{2}{x^{-1}+y^{-1}}.\notag
\end{eqnarray}
Another interesting example of the $f\rightarrow \tilde{f}$ correspondence is that, if $0<\alpha <1$ then 
\begin{eqnarray}
f_{\alpha}(x)=\alpha(1-\alpha)\frac{(x-1)^2}{(x^{\alpha}-1)(x^{1-\alpha}-1)},\notag
\end{eqnarray}
corresponds to
\begin{eqnarray}
\tilde{f}_{\alpha}(x)=\frac{x^{\alpha}+x^{1-\alpha}}{2}.\notag
\end{eqnarray}

\begin{corollary}\label{corollary ftilda}
\begin{eqnarray}\label{spectral ftilda}
Q^f(\rho)=\sum_{k,l=1}^{n}\big[m_a(\lambda_k,\lambda_l)-m^{\tilde{f}}(\lambda_k,\lambda_l) \big],
\end{eqnarray}
where $m_a(\cdot,\cdot)$ is the arithmetic mean $m_a(x,y)=(x+y)/2$.
\end{corollary}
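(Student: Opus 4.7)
The plan is to reduce the corollary to a pointwise identity between the means $m^f$ and $m^{\tilde f}$, and then invoke Theorem \ref{theoremofspectral}. Concretely, I would show that for all $x,y>0$,
\begin{eqnarray}
m_a(x,y)-m^{\tilde{f}}(x,y)=\frac{f(0)}{2}\cdot\frac{(x-y)^2}{m^f(x,y)}.\notag
\end{eqnarray}
Once this identity is established, summing over $k,l=1,\dots,n$ with $x=\lambda_k$, $y=\lambda_l$ gives the right-hand side of (\ref{spectral ftilda}) equal to $\frac{f(0)}{2}\sum_{k,l}(\lambda_k-\lambda_l)^2/m^f(\lambda_k,\lambda_l)$, which is exactly the spectral representation (\ref{spectrum representation}) for $Q^f(\rho)$.

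To derive the pointwise identity, I would start from the definition of the operator mean $m^{\tilde f}(x,y)=x\tilde f(y/x)$ and substitute the formula $\tilde f(t)=\tfrac12\big[(t+1)-(t-1)^2 f(0)/f(t)\big]$ with $t=y/x$. The first summand contributes $\tfrac12(x+y)=m_a(x,y)$, while the second summand yields $-\tfrac{f(0)}{2}\cdot\tfrac{(y-x)^2}{x}\cdot\tfrac{1}{f(y/x)}=-\tfrac{f(0)(x-y)^2}{2\,m^f(x,y)}$, using $xf(y/x)=m^f(x,y)$. Rearranging gives the claimed identity. This is purely algebraic and requires no analytic input beyond the definitions already recorded in the paper.

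There is really no hard step here; the only thing to be careful about is that the mean $m^f$ is well defined on the relevant pairs $(\lambda_k,\lambda_l)$. If $\rho$ is not of full rank then some $\lambda_k=0$, and one must interpret $m^f(\lambda_k,\lambda_l)$ together with the corresponding term $(\lambda_k-\lambda_l)^2/m^f(\lambda_k,\lambda_l)$ via the limit (this is already implicit in the statement of Theorem \ref{theoremofspectral}). The same limiting convention applies to the $m^{\tilde f}$ side, and the pointwise identity extends by continuity in $x,y\ge 0$, so the argument goes through uniformly. Hence the corollary follows immediately by combining the identity with Theorem \ref{theoremofspectral}.
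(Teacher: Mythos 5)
Your proof is correct and follows exactly the paper's route: the paper's own proof consists precisely of the pointwise identity $m_a(x,y)-m^{\tilde f}(x,y)=\tfrac{f(0)}{2}(x-y)^2/m^f(x,y)$ (stated there as ``a direct algebraic calculation''), combined implicitly with Theorem \ref{theoremofspectral}. You have simply written out that calculation explicitly and added a sensible remark about the rank-deficient case, so there is nothing to change.
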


\begin{proof}
A direct algebraic calculation can show that
\begin{eqnarray}
m_a(x,y)-m^{\tilde{f}}(x,y)=\frac{f(0)}{2}\frac{(x-y)^2}{m^f(x,y)}. \notag
\end{eqnarray}
\end{proof}
Note that the two means $m_a(\cdot,\cdot)$ and $m^{\tilde{f}}(\cdot,\cdot)$ satisfy
\begin{enumerate}
\item[(i)] $m_a(x,x)=m^{\tilde{f}}(x,x)=x$,
\item[(ii)] $m_a(x,y)=m_a(y,x)$ and $m^{\tilde{f}}(x,y)=m^{\tilde{f}}(y,x)$.
\end{enumerate}
Thus the Corollary \ref{corollary ftilda} is exactly the conjecture ii) in \cite{gibiliscofisherinf}. And in the case
\begin{eqnarray}
\tilde{f}_{\alpha}(x)=\frac{x^{\alpha}+x^{1-\alpha}}{2},\notag
\end{eqnarray}
the authors obtained Formula (\ref{spectral ftilda}) at Page 149 of \cite{Li2011}. 

Due to the Corollary \ref{corollary ftilda}, we immediately get the following comparison result of $Q^f(\rho)$.
\begin{corollary}\label{corollary comparison}
If for any positive $x$ and $y$,
\begin{eqnarray}
m^{\tilde{f}}(x,y)\geq m^{\tilde{g}}(x,y),\notag
\end{eqnarray}
then we have
\begin{eqnarray}\label{comparison 1}
Q^f(\rho)\leq Q^g(\rho).
\end{eqnarray}
Particularly, for any $f\in \mathcal{F}^r_{op}$, we have
\begin{eqnarray}\label{comparison 2}
Q^f(\rho)\leq Q^{f^{\mathrm{SLD}}}(\rho).
\end{eqnarray}
\end{corollary}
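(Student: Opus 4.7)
The plan is to derive both inequalities as near-immediate consequences of the spectral representation proved in Corollary \ref{corollary ftilda}, which expresses $Q^f(\rho)$ as a double sum over the spectrum of $\rho$ of the differences $m_a(\lambda_k,\lambda_l) - m^{\tilde{f}}(\lambda_k,\lambda_l)$.

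For the first inequality \eqref{comparison 1}, I would fix the spectrum $\{\lambda_j\}$ of $\rho$ and apply Corollary \ref{corollary ftilda} to each side, obtaining
\begin{eqnarray}
Q^g(\rho) - Q^f(\rho) = \sum_{k,l=1}^{n}\bigl[m^{\tilde{f}}(\lambda_k,\lambda_l) - m^{\tilde{g}}(\lambda_k,\lambda_l)\bigr].\notag
\end{eqnarray}
Since the eigenvalues $\lambda_k,\lambda_l$ are nonnegative (and positive whenever $\rho$ is faithful, with the boundary case handled by continuity of the means) the hypothesis $m^{\tilde{f}}(x,y) \geq m^{\tilde{g}}(x,y)$ applies termwise, yielding nonnegativity of the right-hand side. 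This is really just a monotonicity transfer from pointwise comparison of two-variable means to the induced functionals on states.

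For the particular inequality \eqref{comparison 2}, the goal is to show that $m^{\tilde{f}}(x,y) \geq m^{\tilde{f}^{\mathrm{SLD}}}(x,y)$ for every $f \in \mathcal{F}^r_{op}$ and all $x,y>0$, so that the first part applies with $g = f^{\mathrm{SLD}}$. Since $\tilde{f}^{\mathrm{SLD}}(x) = 2x/(x+1)$ generates the harmonic mean $m^{\tilde{f}^{\mathrm{SLD}}}(x,y) = 2(x^{-1}+y^{-1})^{-1}$, the claim reduces to the fact that among all symmetric operator means in the Kubo--Ando sense (i.e.\ those induced by functions in $\mathcal{F}_{op}$, whether regular or not), the harmonic mean is the smallest. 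I would invoke this standard fact from Kubo--Ando theory: every operator mean $m^h$ with $h \in \mathcal{F}_{op}$ satisfies $m^{\tilde{f}^{\mathrm{SLD}}} \leq m^h \leq m_a$. Applying this with $h = \tilde{f}$ gives the needed pointwise inequality.

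I do not anticipate a real obstacle: the only subtlety is making clear that the bijection $f \mapsto \tilde{f}$ between $\mathcal{F}^r_{op}$ and $\mathcal{F}^n_{op}$ keeps us inside the class of functions to which the Kubo--Ando extremality result applies, so that the harmonic mean is indeed dominated by $m^{\tilde{f}}$ for every regular $f$. Once that is noted, both inequalities follow in a line from Corollary \ref{corollary ftilda}.
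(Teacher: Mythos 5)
Your proposal is correct and follows essentially the same route as the paper: the first inequality is obtained termwise from the spectral representation of Corollary \ref{corollary ftilda}, and the second from the Kubo--Ando fact that the harmonic mean is the smallest operator mean, applied to $m^{\tilde{f}}$. Your remarks on the boundary case of zero eigenvalues and on $\tilde{f}$ remaining in $\mathcal{F}_{op}$ are sensible elaborations of points the paper leaves implicit.
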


\begin{proof}
It is well-known that among all the operator means, the harmonic one is the smallest one (see \cite{MR2439456,MR563399}), i.e., 
\begin{eqnarray}
m^{\tilde{f}}(y,x)\geq \frac{2}{x^{-1}+y^{-1}}=m^{\tilde{f}^{\mathrm SLD}}(x,y),\notag
\end{eqnarray}
for any positive $x$ and $y$. So in view of the comparison result (\ref{comparison 1}), we get (\ref{comparison 2}).
\end{proof}

\begin{corollary}\label{corollary n-1}
Let $\rho$ be a state of a quantum system with dimension $n$, then
\begin{eqnarray}\label{n-1}
0 \leq Q^f(\rho)\leq n-1,
\end{eqnarray}
and the bounds are tight, i.e., $ Q^f(\rho)=0$ when $\rho$ is $\mathbf{1}/n$ and $ Q^f(\rho)=n-1$ when $\rho$ is a pure state.
\end{corollary}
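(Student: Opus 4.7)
The plan is to derive both bounds and their tightness from the spectral representation (\ref{spectrum representation}) together with the comparison Corollary \ref{corollary comparison}. For the lower bound every summand $(\lambda_k-\lambda_l)^2/m^f(\lambda_k,\lambda_l)$ is a square over a positive operator mean, hence non-negative, so $Q^f(\rho)\geq 0$; the $0/0$ that appears when both eigenvalues vanish is read as $0$, consistent with $L_\rho-R_\rho$ acting as zero on the joint zero eigenspace of $\rho$. Equality at $\rho=\mathbf{1}/n$ is immediate because there all $\lambda_k$ coincide and every numerator vanishes.

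For the upper bound, Corollary \ref{corollary comparison} gives $Q^f(\rho)\leq Q^{f^{\mathrm{SLD}}}(\rho)$, so it suffices to bound the SLD case. Using Corollary \ref{corollary ftilda} together with $m^{\tilde{f}^{\mathrm{SLD}}}(x,y)=2xy/(x+y)$, the elementary identity
\begin{equation*}
\frac{\lambda_k+\lambda_l}{2}-\frac{2\lambda_k\lambda_l}{\lambda_k+\lambda_l}=\frac{(\lambda_k-\lambda_l)^2}{2(\lambda_k+\lambda_l)}
\end{equation*}
rewrites $Q^{f^{\mathrm{SLD}}}(\rho)$ as
\begin{equation*}
\sum_{k,l=1}^{n}\frac{\lambda_k+\lambda_l}{2}-\sum_{k,l=1}^{n}\frac{2\lambda_k\lambda_l}{\lambda_k+\lambda_l}=n-\sum_{k,l=1}^{n}\frac{2\lambda_k\lambda_l}{\lambda_k+\lambda_l}.
\end{equation*}
The diagonal $k=l$ of the subtracted sum already contributes $\sum_k\lambda_k=1$ and the off-diagonal terms are non-negative, so $Q^{f^{\mathrm{SLD}}}(\rho)\leq n-1$.

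For tightness at a pure state I would bypass the spectral formula entirely and invoke property (a) listed in the introduction: when $\rho$ is pure, $I^f(\rho,H_j)=V(\rho,H_j)$ for every $j$, and therefore $Q^f(\rho)=\sum_j V(\rho,H_j)=U(\rho)=n-\mathrm{tr}\rho^2=n-1$. The whole argument is essentially bookkeeping; the one genuinely conceptual move, which I would flag as the main step rather than a computational obstacle, is the reduction from general $f$ to $f^{\mathrm{SLD}}$ via the fact that the harmonic mean is the smallest operator mean, which is exactly what makes Corollary \ref{corollary comparison} the right tool to isolate a single explicit case to compute.
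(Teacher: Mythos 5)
Your proof is correct, but it reaches the upper bound and the pure-state tightness by a genuinely different route than the paper. The paper evaluates $Q^f$ at a pure state directly from the spectral formula (the spectrum $\{1,0,\dots,0\}$ gives $(n-1)f(0)(1-0)^2/m^f(1,0)=n-1$, using $m^f(1,0)=f(0)$), and then gets $Q^f(\rho)\leq n-1$ for every $\rho$ from the convexity of $I^f(\cdot,H)$ in the state (cited from Hansen) together with the fact that any $\rho$ is a mixture of pure states. You instead reduce to the single case $f=f^{\mathrm{SLD}}$ via Corollary \ref{corollary comparison}, compute $Q^{f^{\mathrm{SLD}}}(\rho)=n-\sum_{k,l}2\lambda_k\lambda_l/(\lambda_k+\lambda_l)$, and bound it by the diagonal contribution $\sum_k\lambda_k=1$; and you obtain the value $n-1$ at pure states from property (a) ($I^f=V$ on pure states, so $Q^f=n-\mathrm{tr}\rho^2$) rather than from the spectral representation. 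Both arguments are sound. The paper's convexity route is the more conceptual one and simultaneously delivers the convexity of $Q^f$, which is recorded as property (b) immediately afterwards; your route is self-contained given the two corollaries already proved and avoids invoking the externally cited convexity of the metric adjusted skew information in $\rho$. The only point you should make explicit is the convention for degenerate terms: the summands with $\lambda_k=\lambda_l=0$ in both the spectral formula and your SLD sum must be read as their limiting value $0$, exactly as you note for the lower bound.
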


\begin{proof}
According to Theorem \ref{theoremofspectral}, we readily get $Q^f(\mathbf{1}/n)=0$. And 
if $\rho$ is a pure state denoted by $|\psi \rangle\langle \psi |$, we have
\begin{eqnarray}\label{tight bounds}
Q^f(|\psi \rangle\langle \psi |)=(n-1)f(0)\frac{(1-0)^2}{m^f(1,0)}=n-1.
\end{eqnarray}
Thanks to the convexity of the metric adjusted skew information $I^f(\rho,H)$ w.r.t. $\rho$ (see \cite{MR2430207}), $Q^f(\rho)$ is also convex w.r.t. $\rho$, so the desired inequality (\ref{n-1}) is obtained from that $\rho$ is always a mixture of pure states.
\end{proof}

Now it is desirable to summarize the instructive properties of $Q^f(\rho)$ as a measure of quantum uncertainty:

(a) $Q^f(\rho)$ is unitary invariant, i.e., for any unitary operator $U$, we have $Q^f(U\rho U^{\dag})=Q^f(\rho)$.

(b) $Q^f(\rho)$ is convex w.r.t. $\rho$, which means it decreases when several states are mixed. Since mixing is a classical procedure which can not increase quantum uncertainty.

(c) $Q^f(\rho)$ has the spectrum representation (\ref{spectrum representation}), and has tight bounds (\ref{tight bounds}). Which indicates that the maximally mixed state $\mathbf{1}/n$ has no quantum uncertainty since it commutes with any observable, and all the pure states have the same maximal quantum uncertainty as we expect. 

(d) $Q^f(\rho)$ is monotone with respect to $f$ in the meaning as shown in Corollary \ref{corollary comparison}.

(e) $Q^f(\rho)$ decreases under the partial trace in the following two senses.

First, let $\rho$ be a pure state of composite system $H^a \otimes H^b$, then
\begin{eqnarray}
Q^f(\rho)\geq Q^f(\rho^a).\notag
\end{eqnarray}
Since $Q^f(\rho)$ is equal to $\mathrm{dim}(H^a \otimes H^b)-1$, and $Q^f(\rho^a)$ is less than or equal to $\mathrm{dim}H^a-1$.

Second, let $\{A_j\}$ be an orthonormal base of $L(H^a)$, then
\begin{eqnarray}
\sum_{j=1}^{m^2}I^f(\rho^{ab},A_j\otimes \mathbf{1}^b)-\sum_{j=1}^{m^2}I^f(\rho^a,A_j)\geq 0.
\notag
\end{eqnarray}
The difference between these two quantities (the left side of the above inequality) captures the correlations in $\rho^{ab}$ that can be probed by local observables of $H_a$ (see \cite{PhysRevA.85.032117}). In the next section we will show its application to detect the nonlocality of a composite state $\rho^{ab}$.


\section{Detection of nonlocality and entanglement with $Q^{f}$}\label{secofapplications}

Consider a bipartite state $\rho^{ab}$ of the composite system $H^a \otimes H^b$ with $\mathrm{dim} H^a=m$ and $\mathrm{dim} H^b=n$. To quantify the correlations of $\rho^{ab}$ between systems $a$ and $b$ via $Q^{f}$, we define
\begin{eqnarray}\label{Fbar}
\bar{F}^f(\rho^{ab})=\sum_{j=1}^{m^2}I^f(\rho^{ab},A_j\otimes \mathbf{1}^b)-\sum_{j=1}^{m^2}I^f(\rho^a,A_j),
\end{eqnarray}
and
\begin{eqnarray}\label{Fhat}
\hat{F}^f(\rho^{ab})=\sum_{j=1}^{m^2}I^f(\rho^{ab},A_j\otimes \mathbf{1}^b+\mathbf{1}^a \otimes B_j), \mathrm{\quad when \quad} m=n. 
\end{eqnarray}
Here $\{A_j\}$ and $\{B_j\}$ are orthonormal bases of $L(H^a)$ and $L(H^b)$. Following a similar procedure as in the last section, we can prove these two measures do not depend on the choice of the bases. The measure (\ref{Fbar}) with $f=f^{WY}$ is applied to detect the nonlocality of $\rho^{ab}$ in \cite{PhysRevA.85.032117}. And the measure 
(\ref{Fhat}) with $f(t)=(t+1)/2$, which corresponds to the usual quantum Fisher information (see \cite{Helstrombook}), is applied to detect the entanglement of $\rho^{ab}$ in \cite{PhysRevA.88.014301}. Now we revisit these two results for any regular $f \in \mathcal{F}_{op}$ (that is to say for any metric adjusted skew information).

To show the application of the measure (\ref{Fbar}) in the detection of nonlocality, we need a lemma of the strict convexity of the function $\frac{(t-1)^2}{f(t)}$. The convexity of the function $\frac{(t-1)^2}{f(t)}$ has been well discussed (see \cite{MR2661519,MR2430207}). However the strictness of the convexity has not been clarified as far as I know. So we would like to give the following lemma.

\begin{lemma}\label{strictness of the convexity}
For a regular $f \in \mathcal{F}_{op}$, the convexity of the function $g(t)=\frac{(t-1)^2}{f(t)}$ on $R_+$ is strict. That is to say, if we have $g(\sum_i \lambda_it_i)=\sum_i \lambda_ig(t_i)$, where $0<\lambda_i <1, \sum_{i}\lambda_i=1$ and $t_i \in R_+$, then we get all $t_i$ are the same.
\end{lemma}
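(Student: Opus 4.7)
The plan is to reduce strict convexity of $g$ on $\mathbb{R}_+$ to strict concavity of the companion function $\tilde f\in\mathcal{F}_{op}^n$, and then to rule out affine behaviour of $\tilde f$ on any sub-interval using the rigidity of operator monotone functions.

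Specialising the algebraic identity displayed inside the proof of Corollary \ref{corollary ftilda} to the pair $(x,y)=(t,1)$, and using the functional equations (which give $m^f(t,1)=t\,f(1/t)=f(t)$ and $m^{\tilde f}(t,1)=\tilde f(t)$), I would obtain
\[
g(t)=\frac{(t-1)^2}{f(t)}=\frac{2}{f(0)}\Bigl[\frac{t+1}{2}-\tilde f(t)\Bigr].
\]
Since $(t+1)/2$ is affine in $t$, strict convexity of $g$ on $\mathbb{R}_+$ is equivalent to strict concavity of $\tilde f$ on $\mathbb{R}_+$. Because $\tilde f$ is operator monotone on $(0,\infty)$, it is operator concave and a fortiori concave in the ordinary sense; only strictness remains in question.

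Suppose, for contradiction, that $\tilde f$ were affine on some open sub-interval $(a,b)\subset\mathbb{R}_+$. By L\"owner's theorem every operator monotone function on $(0,\infty)$ extends analytically to $\mathbb{C}\setminus(-\infty,0]$ and is in particular real-analytic on $\mathbb{R}_+$; by the identity theorem this would force $\tilde f(t)=\alpha+\beta t$ on all of $\mathbb{R}_+$. Plugging this into the functional equation $\tilde f(t)=t\,\tilde f(1/t)$ yields $\alpha(1-t)=\beta(1-t)$ for every $t>0$, hence $\alpha=\beta$; combined with the non-regularity condition $\tilde f(0)=0$ we get $\alpha=\beta=0$, contradicting $\tilde f(1)=1$. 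Hence $\tilde f$ is affine on no sub-interval, its concavity is strict, and $g$ is correspondingly strictly convex; the conclusion that all $t_i$ coincide then follows from the standard strict form of Jensen's inequality (if two of the $t_i$ differed, strict convexity applied to the corresponding two-point sub-combination would yield a strict inequality contradicting the hypothesis). I expect the only non-routine step to be the passage from ``affine on an interval'' to ``affine everywhere''; the analyticity of operator monotone functions supplies the cleanest justification, and an equivalent route is the L\"owner--Kraus integral representation $\tilde f(t)=\beta t+\int_0^\infty\tfrac{t}{t+s}\,d\nu(s)$, in which each atom $t\mapsto t/(t+s)$ is strictly concave and the affine case forces $\nu\equiv0$.
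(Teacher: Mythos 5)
Your proof is correct, but it follows a genuinely different route from the paper's. The paper decomposes $g(t)=\frac{t^2}{f(t)}-\frac{2t}{f(t)}+\frac{1}{f(t)}$ into three functions already known (from Cai--Hansen) to be convex, and then establishes strict convexity of the single summand $\frac{1}{f(t)}$ via Hansen's canonical representation $\frac{1}{f(t)}=\int_0^1\bigl(\frac{1}{t+\lambda}+\frac{1}{1+t\lambda}\bigr)\mathrm{d}\mu(\lambda)$, each integrand being strictly convex and $\mu$ being forced to be nonzero by its normalization. You instead read off from the definition of $\tilde f$ the identity $g(t)=\frac{2}{f(0)}\bigl[\frac{t+1}{2}-\tilde f(t)\bigr]$, reduce the claim to strict concavity of the non-regular companion $\tilde f\in\mathcal{F}_{op}^n$, and rule out affine behaviour on a sub-interval by real-analyticity (L\"owner) together with the functional equation $\tilde f(t)=t\tilde f(1/t)$ and $\tilde f(0)=0$; your final reduction of the multi-point Jensen equality to a two-point sub-combination is the standard one and is sound. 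Your approach is more conceptual and dovetails with Corollary~\ref{corollary ftilda}, at the cost of invoking the analyticity of operator monotone functions and the regular/non-regular bijection; the paper's argument is more computational but entirely explicit, needing only the integral representation of $1/f$. Both are complete; the only point worth making fully explicit in your write-up is the elementary fact that a concave function attaining equality in a two-point Jensen inequality must be affine on the corresponding interval, which is what connects ``affine on no sub-interval'' to ``strictly concave.''
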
   
\begin{proof}
In Cai and Hansen$^{\cite{MR2661519}}$, we show that the three functions $\frac{t^2}{f(t)},\frac{-2t}{f(t)}$ and $\frac{1}{f(t)}$ are all convex on $R_+$ (indeed they are operator convex on $R_+$). So it suffices to show one of the three functions, for example, $\frac{1}{f(t)}$ is strictly convex. According to Hansen$^{\cite{MR2430207}}$, $\frac{1}{f(t)}$ has a canonical representation
\begin{eqnarray}
\frac{1}{f(t)}=\int_0^1\Big(\frac{1}{t+\lambda}+\frac{1}{1+t\lambda}\Big)\mathrm{d}\mu(\lambda),\notag
\end{eqnarray}
where $\mu(\lambda)$ is a finite Borel measure on $[0,1]$ and satisfies
\begin{eqnarray}
2\int_0^1 \frac{1}{1+\lambda} \mathrm{d}\mu(\lambda)=1.\notag
\end{eqnarray}
So due to the strict convexity on $R_+$ of $h(t)=\frac{1}{t+\lambda}+\frac{1}{1+t\lambda}$ for all $\lambda \in [0,1]$, $\frac{1}{f(t)}$ is strictly convex on $R_+$.
\end{proof}

\begin{theorem}

\begin{enumerate}
\item[(1)] $\bar{F}^f(\rho^{ab})=0$ if and only if $\rho^{ab}$ is a product state (i.e., $\rho^{ab}=\rho^a \otimes \rho^b$).
\item[(2)] If $\hat{F}^f(\rho^{ab})>2m-2$, then $\rho^{ab}$ must be entangled, that is to say there do not exist $\lambda_j$, $\rho^a_j$ and $\rho^b_j$ such that
\begin{eqnarray}\label{separable}
\rho^{ab}=\sum_j \lambda_j \rho^a_j\otimes \rho^b_j, \quad 0\leq \lambda_j \leq 1 \quad \mathrm{and}\quad \sum_j \lambda_j=1.
\end{eqnarray}
\end{enumerate}

\end{theorem}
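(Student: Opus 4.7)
The plan is to prove part (2) first, since it uses only the pure-state identity $I^f=V$ from property (a), the convexity from property (c), and Luo's identity $\sum_j V(\rho,A_j)=m-\mathrm{tr}\rho^2$. For a pure product vector $|\phi^a\rangle\otimes|\phi^b\rangle$, expanding $V(\cdot,A\otimes\mathbf{1}^b+\mathbf{1}^a\otimes B)$ makes the cross-covariance $\langle A\otimes B\rangle-\langle A\rangle\langle B\rangle$ vanish on a product state, so $V$ splits as $V(|\phi^a\rangle\langle\phi^a|,A)+V(|\phi^b\rangle\langle\phi^b|,B)$. Invoking $I^f=V$ on the pure state, summing over $j$, and applying Luo's identity to each of the two bases gives $\hat{F}^f(|\phi^a\rangle\langle\phi^a|\otimes|\phi^b\rangle\langle\phi^b|)=2m-2$. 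Each $I^f(\cdot,H_j)$ is convex in the state, hence so is $\hat{F}^f$; since every separable state (\ref{separable}) is a convex combination of pure product states (decompose each $\rho_j^a$ and $\rho_j^b$ spectrally), we obtain $\hat{F}^f(\rho^{ab}_{\text{sep}})\leq 2m-2$, and the contrapositive yields (2).

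For part (1), the forward direction is a plug-in to Theorem \ref{theoremofspectral}. With $\rho^a=\sum_i\mu_i|i\rangle\langle i|$ and $\rho^b=\sum_j\nu_j|j\rangle\langle j|$, the positive homogeneity $m^f(\mu_i\nu_j,\mu_k\nu_j)=\nu_j\, m^f(\mu_i,\mu_k)$ together with $\sum_j\nu_j=1$ collapses the spectral sum to the identity $I^f(\rho^a\otimes\rho^b,A\otimes\mathbf{1}^b)=I^f(\rho^a,A)$ for every $A$, whence $\bar{F}^f=0$. For the reverse direction, the weak superadditivity (\ref{weak superadditivity}) makes each summand of $\bar{F}^f$ non-negative, so $\bar{F}^f=0$ forces $I^f(\rho^{ab},A_j\otimes\mathbf{1}^b)=I^f(\rho^a,A_j)$ for every $j$; the bilinearity of $K_\rho^f$ and the basis-independence of Theorem \ref{theoremofbasisindependent} then upgrade this to $I^f(\rho^{ab},A\otimes\mathbf{1}^b)=I^f(\rho^a,A)$ for every self-adjoint $A$ on $H^a$. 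Since partial trace is CPTP and satisfies $\mathrm{Tr}_b(i[\rho^{ab},A\otimes\mathbf{1}^b])=i[\rho^a,A]$, these equalities are the saturation case of the monotonicity of $K_\rho^f$ under $T=\mathrm{Tr}_b$; combined with the strict convexity from Lemma \ref{strictness of the convexity}, this rigidity forces $\rho^{ab}=\rho^a\otimes\rho^b$.

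The hard part will be this last step of part (1): converting the family of equalities $I^f(\rho^{ab},A\otimes\mathbf{1}^b)=I^f(\rho^a,A)$ into the factorization $\rho^{ab}=\rho^a\otimes\rho^b$. For $f=f^{\mathrm{WY}}$ the argument in \cite{PhysRevA.85.032117} exploits the explicit square-root structure of the Wigner-Yanase information; for a general regular $f$ I would route the argument through the integral representation $\frac{1}{f(t)}=\int_0^1\bigl(\frac{1}{t+\lambda}+\frac{1}{1+t\lambda}\bigr)\,\mathrm{d}\mu(\lambda)$ used in the proof of Lemma \ref{strictness of the convexity}, so that the strict convexity of each kernel $\frac{1}{t+\lambda}+\frac{1}{1+t\lambda}$ can be invoked parametrically (in $\lambda$) to annihilate all off-diagonal contributions to $\rho^{ab}$ in the eigenbasis of $\rho^a\otimes\rho^b$.
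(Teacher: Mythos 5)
Your part (2) is correct and takes a slightly different route from the paper: you reduce to \emph{pure} product states, use $I^f=V$ on pure states together with Luo's identity $\sum_j V(\sigma,A_j)=m-\mathrm{tr}\,\sigma^2$ to get the exact value $2m-2$, and then invoke convexity. The paper instead keeps general product states $\rho^a_j\otimes\rho^b_j$, uses the additivity $I^f(\rho^a_j\otimes\rho^b_j,A_k\otimes\mathbf{1}^b+\mathbf{1}^a\otimes B_k)=I^f(\rho^a_j,A_k)+I^f(\rho^b_j,B_k)$, and bounds each $Q^f$ by $m-1$ via Corollary \ref{corollary n-1}. Both are sound; yours has the small bonus of exhibiting that the bound is attained with equality on every pure product state. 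The forward direction of part (1) matches the paper's computation.

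The genuine gap is in the converse of part (1). Your plan is to use the saturation of monotonicity under $\mathrm{Tr}_b$ plus the integral representation of $1/f$ to ``annihilate all off-diagonal contributions to $\rho^{ab}$ in the eigenbasis of $\rho^a\otimes\rho^b$.'' Even if carried out, that conclusion is too weak: a state that is diagonal in the product eigenbasis, e.g.\ a classically correlated state $\sum_{j,k}p_{jk}\,|j^a\rangle\langle j^a|\otimes|k^b\rangle\langle k^b|$ with non-factorizing $p_{jk}$, is not a product state, yet has no off-diagonal contributions there. The paper's argument has two distinct stages that your sketch conflates. First, applying the equality $I^f(\rho^{ab},A\otimes\mathbf{1}^b)=I^f(\rho^a,A)$ to the eigenprojections $A=|j^a\rangle\langle j^a|$ of $\rho^a$ makes the right side vanish, forcing $[\rho^{ab},|j^a\rangle\langle j^a|\otimes\mathbf{1}^b]=0$ and hence, by Proposition 1 of \cite{MR2407386}, the classical--quantum form $\rho^{ab}=\sum_j p_j|j^a\rangle\langle j^a|\otimes\rho_j$. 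Second --- and this is the step your plan does not reach --- one must show all the conditional states $\rho_j$ coincide; the paper does this by testing the off-diagonal observable $X=\tfrac{1}{\sqrt2}(|1^a\rangle\langle 2^a|+|2^a\rangle\langle 1^a|)$ and invoking the \emph{strict} convexity of $(t-1)^2/f(t)$ (Lemma \ref{strictness of the convexity}) applied to a convex combination of the ratios $p_2\lambda^2_k/(p_1\lambda^1_j)$ with weights $\lambda^1_j|\langle j^1|k^2\rangle|^2$, yielding $(\lambda^1_j-\lambda^2_k)\langle j^1|k^2\rangle=0$ and hence $\rho_1=\rho_2$. Without this second stage your argument cannot exclude purely classical correlations, so the proof as proposed does not close.
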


\begin{proof}
(1) If $\rho^{ab}=\rho^a \otimes \rho^b$, we will show that $\bar{F}^f(\rho^{ab})=0$. Let $\{\lambda^a_j\}$ be the eigenvalues of $\rho^a$ with eigenvectors $\{|j^a\rangle\}$, and  $\{\lambda^b_k\}$ be the eigenvalues of $\rho^b$ with eigenvectors $\{|k^b\rangle\}$, then
\begin{eqnarray}
\rho^a \otimes \rho^b = \sum_{j,k}\lambda_j^a\lambda_k^b |j^a\rangle\langle j^a| \otimes |k^b\rangle\langle k^b|          . \notag
\end{eqnarray}
For any observable $A$ on system $a$, we have
\begin{eqnarray}
I^f(\rho^a\otimes \rho^b, A\otimes \mathbf{1}^b)&=&-\frac{f(0)}{2}\mathrm{tr} A\otimes \mathbf{1}^b
\frac{(L_{\rho^a\otimes \rho^b}-R_{\rho^a\otimes \rho^b})^2}{m^f(L_{\rho^a\otimes \rho^b},R_{\rho^a\otimes \rho^b})}A\otimes \mathbf{1}^b
\notag\\&=& \frac{f(0)}{2}\sum_{j,k,\hat{j},\hat{k}}\frac{(\lambda_j^a\lambda_k^b-\lambda_{\hat{j}}^a\lambda_{\hat{k}}^b)^2} {m^f(\lambda_j^a\lambda_k^b,\lambda_{\hat{j}}^a\lambda_{\hat{k}}^b)}|\langle j^a|A|\hat{j}^a\rangle|^2\cdot
|\langle k^b|\hat{k}^b\rangle|^2
\notag\\&=& \frac{f(0)}{2}\sum_{j,\hat{j},k}\lambda_k^b \frac{(\lambda_j^a-\lambda_{\hat{j}}^a)^2} {m^f(\lambda_j^a,\lambda_{\hat{j}}^a)}|\langle j^a|A|\hat{j}^a\rangle|^2
\notag\\&=& \frac{f(0)}{2}\sum_{j,\hat{j}}\frac{(\lambda_j^a-\lambda_{\hat{j}}^a)^2} {m^f(\lambda_j^a,\lambda_{\hat{j}}^a)}|\langle j^a|A|\hat{j}^a\rangle|^2 = I^f(\rho^a,A). \notag
\end{eqnarray}
We readily know that  $\bar{F}^f(\rho^{ab})=0$. 

Conversely, if  $\bar{F}^f(\rho^{ab})=0$, then in view of inequality (\ref{weak superadditivity}), we have 
\begin{eqnarray}\label{equality of Fbar}
I^f(\rho^{ab},A\otimes \mathbf{1}^b)=I^f(\rho^a,A)
\end{eqnarray}
for any $A \in L(H^a)$, since we can always, up to a constant normalization, take $A$ as an element of $\{A_j\}$ in the definition of $\bar{F}^f(\rho^{ab})$. Especially for any eigenvector $|j^a\rangle$ of $\rho^a$, we have 
\begin{eqnarray}
I^f(\rho^{ab},|j^a\rangle\langle j^a|\otimes \mathbf{1}^b)=0,\notag
\end{eqnarray}
that is to say
\begin{eqnarray}
K^f(i\big[\rho^{ab},|j^a\rangle\langle j^a|\otimes \mathbf{1}^b\big],i\big[\rho^{ab},|j^a\rangle\langle j^a|\otimes \mathbf{1}^b\big])=0.\notag
\end{eqnarray}
Then we know $\big[\rho^{ab},|j^a\rangle\langle j^a|\otimes \mathbf{1}^b\big]$ must be $0$, which ensures that
\begin{eqnarray}
\rho^{ab}=\sum_j (|j^a\rangle\langle j^a|\otimes \mathbf{1}^b) \rho^{ab} (|j^a\rangle\langle j^a|\otimes \mathbf{1}^b).\notag
\end{eqnarray}
According to Propostion 1 in \cite{MR2407386}, $\rho^{ab}$ must be classical-quantum, i.e.  
\begin{eqnarray}
\rho^{ab}=\sum_{j=1}^{m}p_j|j^a \rangle\langle j^a|\otimes \rho_j.\notag
\end{eqnarray}
Where $\{p_j, j=1,...,m\}$ are the eigenvalues of $\rho^a$ with eigenvectors $\{|j^a\rangle, j=1,...,m\}$. 

Now it  suffices to prove $\rho_j =\rho_k$ for any $j, k =1,...,m$, especially $\rho_1 = \rho_2$. Let $X$ be
\begin{eqnarray}
\frac{\sqrt{2}}{2}|1^a\rangle\langle2^a|+\frac{\sqrt{2}}{2}|2^a\rangle\langle1^a|,\notag
\end{eqnarray}
then
\begin{eqnarray}
I^f(\rho^a,X)=\frac{f(0)}{2}\frac{(p_1-p_2)^2}{m^f(p_1,p_2)}.\notag
\end{eqnarray}
Denote the eigenvalues of $\rho_1$ and $\rho_2$ by $\{\lambda^1_j, j=1,...,n\}$ and $\{\lambda^2_j, j=1,...,n\}$ with eigenvectors $\{|j^1 \rangle , j=1,...,n\}$ and $\{|j^2\rangle, j=1,...,n\}$, then
\begin{eqnarray}
I^f(\rho^{ab}, X\otimes \mathbf{1}^b )&=&\frac{f(0)}{2}\sum_{j,k=1}^{n}\frac{(p_1\lambda_j^1-p_2\lambda^2_k)^2}{m^f(p_1\lambda^1_j,p_2\lambda^2_k)}\big{|}\langle j^1 | k^2 \rangle\big{|}^2
\notag\\
&=&\frac{f(0)}{2}\sum_{j,k=1}^{n}\lambda^1_j\frac{(p_1-p_2\frac{\lambda^2_k}{\lambda^1_j})^2}{m^f(p_1,p_2\frac{\lambda^2_k}{\lambda^1_j})}\big{|}\langle j^1 | k^2 \rangle\big{|}^2
\notag\\
&=&\frac{f(0)\cdot p_1}{2}\sum_{j,k=1}^{n}\frac{(1-\frac{p_2}{p_1}\cdot \frac{\lambda_k^2}{\lambda_j^1})^2}{f(\frac{p_2}{p_1}\cdot \frac{\lambda_k^2}{\lambda_j^1})}\cdot \lambda^1_j\big{|}\langle j^1 | k^2 \rangle\big{|}^2.
\notag
\end{eqnarray}
And
\begin{eqnarray}
I^f(\rho^a, X)&=&\frac{f(0)}{2}\frac{(p_1-p_2)^2}{m^f(p_1,p_2)}=\frac{f(0)\cdot p_1}{2}\frac{(1-\frac{p_2}{p_1})^2}{f(\frac{p_2}{p_1})}
\notag\\
&=&\frac{f(0)\cdot p_1}{2}\frac{\Huge{(}1-\frac{p_2}{p_1}\sum_{j,k=1}^{n}\frac{\lambda^2_k}{\lambda^1_j}\cdot \lambda_j^1\big{|}\langle j^1 | k^2 \rangle\big{|}^2\Huge{)}^2}{f(\frac{p_2}{p_1}\sum_{j,k=1}^{n}\frac{\lambda^2_k}{\lambda^1_j}\cdot \lambda_j^1\big{|}\langle j^1 | k^2 \rangle\big{|}^2)}.\notag
\end{eqnarray}
Combining (\ref{equality of Fbar}) and the strict convexity of $\frac{(t-1)^2}{f(t)}$ ( see Lemma \ref{strictness of the convexity}), we have
\begin{eqnarray}
(\lambda^1_j-\lambda^2_k)\langle j^1|k^2\rangle = 0, \quad j,k=1,...,n,\notag
\end{eqnarray}
which ensure that $\rho_1=\rho_2$. Thus we get $\rho_j=\rho_k$ for any $j,k=1,...,n$ in general. Then $\rho^{ab}$ must be a product state.

(2) If $\rho^{ab}$ is separable, i.e. $\rho^{ab}$ can be formulated as (\ref{separable}), then 
\begin{eqnarray}
\hat{F}^f(\rho^{ab})&=&\sum_{k=1}^{m^2}I^f(\sum_j \lambda_j \rho^a_j\otimes \rho^b_j,A_k\otimes \mathbf{1}^b+\mathbf{1}^a \otimes B_k)
\notag\\ &\leq& \sum_{k=1}^{m^2} \sum_j \lambda_j I^f(\rho^a_j\otimes \rho^b_j,A_k\otimes \mathbf{1}^b+\mathbf{1}^a \otimes B_k)
\notag\\&=& \sum_j \lambda_j \Big( \sum_{k=1}^{m^2}I^f(\rho_j^a, A_k) + \sum_{k=1}^{m^2}I^f(\rho_j^b, B_k)\Big)
\notag\\&=& \sum_j \lambda_j \Big(Q^f(\rho^a)+Q^f(\rho^b)\Big)\leq 2m-2. \notag
\end{eqnarray}
The second inequality follows the convexity of $I^f$, and the last inequality follows Corollary \ref{corollary n-1}.
\end{proof}

Further, prompted by Li and Luo$^{\cite{PhysRevA.88.014301}}$, we would like to combine the entanglement criterion via metric adjusted skew information with the criterion via variance suggested in Hofmann and Takeuchi$^{\cite{PhysRevA.68.032103}}$. If  $\rho^{ab}$ is separable, then the following inequality holds:
\begin{eqnarray}
\sum_{k=1}^{m^2}V( \rho^{ab},A_k\otimes \mathbf{1}^b+\mathbf{1}^a \otimes B_k) \geq 2m-2.\notag
\end{eqnarray}
So the violation of this inequality is also a signature of entanglement. To illustrate how they work, we will evaluate these two criteria for the typical example given in Li and Luo$^{\cite{PhysRevA.88.014301}}$.

\begin{example}
Let $H^a=H^b$ with dimension $m=3$, and $\{|0\rangle,|1\rangle,|2\rangle\}$ be an orthonormal base of $H^a$ (and also of $H^b$). Consider the $3\times 3$ dimensional state
\begin{eqnarray}
\rho^{ab}_p=(1-p)\frac{\mathbf{1}}{9}+p|\Omega\rangle\langle \Omega| \notag
\end{eqnarray}
on $H^a \otimes H^b$ with $\mathbf{1}$ the identity operator on $H^a \otimes H^b$, and 
\begin{eqnarray}
|\Omega\rangle=\frac{1}{\sqrt{3}}\big(|00\rangle+|11\rangle+|22\rangle\big).\notag
\end{eqnarray}
Let
\begin{eqnarray}
&&A_1=|0\rangle\langle0|,\quad A_2=|1\rangle\langle1|, \quad A_3=|2\rangle\langle 2|,
\notag\\ 
&&A_4=\frac{1}{\sqrt{2}}\big(|0\rangle\langle1|+|1\rangle\langle 0|\big),\quad A_5=\frac{1}{\sqrt{2}}\big(|0\rangle\langle 2|+|2\rangle\langle 0|\big),
\notag\\
&&A_6=\frac{1}{\sqrt{2}}\big(|1\rangle\langle2|+|2\rangle\langle 1|\big),\quad A_7=\frac{i}{\sqrt{2}}\big(|0\rangle\langle 1|-|1\rangle\langle 0|\big),
\notag\\
&&A_8=\frac{i}{\sqrt{2}}\big(|0\rangle\langle 2|-|2\rangle\langle 0|\big),\quad A_9=\frac{i}{\sqrt{2}}\big(|1\rangle\langle 2|-|2\rangle\langle 1|\big),\notag
\end{eqnarray}
be a local orthonormal observable base for subsystem $a$ and $\{B_k\}$ be similarly defined for subsystem $b$. Then we obtain
\begin{eqnarray}
\hat{F}^f(\rho^{ab}_p)=\sum_{j=1}^{9}I^f(\rho^{ab}_p,A_j\otimes \mathbf{1}^b+\mathbf{1}^a \otimes B_j)
=\frac{20}{3}f(0)\times\frac{p^2}{m^f(\frac{1}{9}+\frac{8}{9}p,\frac{1}{9}-\frac{1}{9}p)},\notag
\end{eqnarray}
and
\begin{eqnarray}
\hat{V}(\rho^{ab}_p)=\sum_{j=1}^{9}V(\rho^{ab}_p,A_j\otimes \mathbf{1}^b+\mathbf{1}^a \otimes B_j)
=\frac{16}{3}+\frac{4}{3}p.\notag
\end{eqnarray}
Thus when $\hat{F}^f(\rho^{ab}_p)>4$, the state $\rho^{ab}_p$ is entanglement detected by the criterion via metric adjusted skew information. This happens, for example, when $f(t)=(t+1)/2$, and $p=0.7$, then $\hat{F}^f(\rho^{ab}_p)=4.2609$ as calculated in \cite{PhysRevA.88.014301}. But it can not be detected by the criterion via variance, since $\hat{V}(\rho^{ab}_p) \geq 16/3$ for any $0\leq p\leq 1$.
\end{example}



\section{Conclusion and Discussion}\label{discussion}
Based on the essential properties of the metric adjusted skew information, we have investigated the measure of quantum uncertainty $Q^f$ suggested in Gibilisco \cite{gibiliscofisherinf}. These measures can be evaluated by any orthonormal base of the observable space directly, so they are more convenient to quantify the correlations of bipartite state, comparing with some well-known measures of correlations including the entanglement of formation$^{\cite{PhysRevA.54.3824}}$, quantum discord$^{\cite{PhysRevLett.88.017901}}$, the measurement-induced nonlocality$^{\cite{PhysRevLett.106.120401}}$, etc. Although we have exibited some fundamental properties of $Q^f$ such as basis independence and convexity, it is desirable to investigate that if it satisfies more fundamental properties as von Neumann entropy, especially the strong subadditivity and the convexity of the incremental information (see \cite{MR3572497,PhysRevLett.30.434}). And one more interesting question is that, under which assumptions of quantum uncertainty measures, each measure of quantum uncertainty corresponds to a metric adjusted skew information.

\vskip 0.5cm

\section* {Acknowledgements} This work is supported by the National Natural Science Foundation of China (11301025).

\end{document}